\newcommand{\Uh}{\mathbb{U}/H}
\newcommand{\Haar}{\mathrm{Haar}}
\newtheorem{theorem}{Theorem}[section]
\newtheorem{remark}[theorem]{Remark}
\newtheorem{lemma}[theorem]{Lemma}
\newtheorem{definition}[theorem]{Definition}
\newtheorem{corollary}[theorem]{Corollary}
\begin{document}

\title{A quantum tug of war between randomness and symmetries on homogeneous spaces}

\author{Rahul Arvind}
\affiliation{%
 Institute of High Performance Computing (IHPC), Agency for Science, Technology and Research (A*STAR), 1 Fusionopolis Way, \#16-16 Connexis, Singapore 138632, Republic of Singapore
}%
\author{Kishor Bharti}
\affiliation{%
 Institute of High Performance Computing (IHPC), Agency for Science, Technology and Research (A*STAR), 1 Fusionopolis Way, \#16-16 Connexis, Singapore 138632, Republic of Singapore
}
\author{Jun Yong Khoo}
\affiliation{%
 Institute of High Performance Computing (IHPC), Agency for Science, Technology and Research (A*STAR), 1 Fusionopolis Way, \#16-16 Connexis, Singapore 138632, Republic of Singapore
}
\author{Dax Enshan Koh}
\affiliation{%
 Institute of High Performance Computing (IHPC), Agency for Science, Technology and Research (A*STAR), 1 Fusionopolis Way, \#16-16 Connexis, Singapore 138632, Republic of Singapore
}
\author{Jian Feng Kong}
\affiliation{%
 Institute of High Performance Computing (IHPC), Agency for Science, Technology and Research (A*STAR), 1 Fusionopolis Way, \#16-16 Connexis, Singapore 138632, Republic of Singapore
}

\begin{abstract}
\normalsize

We explore the interplay between symmetry and randomness in 
quantum information. Adopting a geometric approach, we consider states as $H$-equivalent if related by a symmetry transformation characterized by the group $H$. We then introduce the Haar measure on the homogeneous space $\mathbb{U}/H$, characterizing true randomness for $H$-equivalent systems. While this mathematical machinery is well-studied by mathematicians, it has seen limited application in quantum information: we believe our work to be the first instance of utilizing homogeneous spaces to characterize symmetry in quantum information. This is followed by a discussion of approximations of true randomness, commencing with $t$-wise independent approximations and defining $t$-designs on $\mathbb{U}/H$ and $H$-equivalent states. Transitioning further, we explore pseudorandomness, defining pseudorandom unitaries and states within homogeneous spaces. Finally, as a practical demonstration of our findings, we study the expressibility of quantum machine learning ansatze in homogeneous spaces. Our work provides a fresh perspective on the relationship between randomness and symmetry in the quantum world.
\end{abstract}

\maketitle

\section{Introduction}

Symmetry is a powerful tool for approaching various problems in physics, often allowing us to reduce complicated systems into much simpler ones \cite{gross1996role}. For example, symmetries in quantum field theory enable us to derive conserved charges, while in high-energy physics, symmetry breaking is used to explain the Higgs mechanism \cite{Englert_1964, Pich_2016}. In quantum physics, we often use symmetries when solving for the dynamics of systems, where a standard trick is to use the result that commuting observables share the same eigenvectors. Its applications in quantum information and computing have been vast, such as in randomized benchmarking \cite{onorati2019randomized} and quantum machine learning \cite{schatzki2022theoretical, ragone2023_representation, kazi2023_universality, nguyen2022_theory,PRXQuantum.3.030341, west2022_reflection, mernyei2022_equivariant, Zheng_2023, sauvage2022_building}.  At the same time, symmetry tends to decrease the randomness in a system. Consider for instance a simple one-dimensional Ising chain with $N$ sites. If we impose the symmetry constraint that the even sites all have the same eigenvalue $P_{E}$ for the parity operator (and similarly $P_{O}$ for the odd sites), the set of microstates that this system can be in is simply $\{ -1, 1\}^{2}$, a much smaller set than the $2^{N}$-many that we would have had if there was no symmetry constraint. As a result the entropy of the system diminishes greatly. In this sense, randomness and symmetry seem to \textit{oppose} each other. Intuitively, if we were to think of randomness in a system as a lack of information about the description of some probabilistic system, one could think of the symmetry as providing some additional information about it. For a pictorial view, one can think of the entire state space as a rope, with randomness and symmetry pulling it in opposite directions (Figure \ref{fig:symmetry}).

There are multiple ways to think about symmetry in a quantum system. One such way is commutation ---  we say that $H \leq \mathbb{U}(d)$ is a symmetry group if all its elements $v\in H$ satisfy $[v,u] = 0$ for all $u \in \mathbb{U}(d)$, the unitary group on $d$ dimensions, where the symbol $\leq$ indicates the subgroup relation. The notion of symmetric designs in this context has been defined in \cite{mitsuhashi2023clifford}. In our work, we take a more geometric approach. Let $H$ be the symmetry group in question that we want to work with, and let us call two states $H$-equivalent if it is possible to obtain one of the states from the other by left multiplication with an element in $H$. Then (in some applications) the two $H$-equivalent states \textit{might as well} be the same --- whether we have one or the other, we can obtain the other one using a symmetry transformation. This invites us to think of the homogeneous space $\mathbb{U}/H$, or the set of all $H$-equivalent unitaries, where the equivalence is under left-multiplication by the elements in $\mathbb{U}$. This homogeneous space is the set of all left cosets of $\mathbb{U}$ under the subgroup $H$; put more simply, it contains a set of equivalence classes, in which each class contains a set of unitaries which are $H$-equivalent to each other. In the context of quantum computing, what we are doing becomes clearer: here we start often with the $\ket{0}^{\otimes N}$ state, and use our quantum circuit to apply unitary operators to it. If we decide that we do not need our circuit to produce every state in the Hilbert space --- rather, we only want our circuit to produce all the $H$-inequivalent states --- then it suffices for the circuit to just generate one representative of $\mathbb{U}/H$. We shall use this idea later in the paper to discuss the expressibility of a quantum learning task.

In our case, randomness for a quantum system refers to picking states and operators at random --- we would say, for example, that an ensemble of operators is random if the ensemble was not biased towards any particular operator. In a certain sense, possession of a truly uniform ensemble should not provide the user any knowledge whatsoever about what one might obtain when sampling from this ensemble. The Haar measure (and hence the ensemble of unitaries that one gets when sampling according to this measure) provides us with such a truly uniform ensemble, in the sense that it is invariant under unitary multiplication. Indeed, the Haar measure plays an important role in quantum information and quantum computation \cite{mele2023introduction}, formalizing the notion of generating unitary operators uniformly at random. Its significance extends across various applications, including randomized benchmarking \cite{magesan2012characterizing}, shadow tomography \cite{huang2020predicting,koh2022classical,chen2021robust,hu2023classical}, quantum machine learning \cite{mcclean2018barren}, near-term quantum algorithms~\cite{bharti2021noisy}, pseudorandomness \cite{ji2018pseudorandom}, and evidence for quantum advantage \cite{hangleiter2023computational,hangleiter2018anticoncentration,bouland2018complexity}.

We begin our paper with a discussion of integration on homogeneous spaces of locally compact groups, defining a Haar measure on these spaces. This is the main mathematical machinery that we will use throughout the paper. It has been well studied by mathematicians, but its use in quantum information has been limited so far. To our knowledge, this is the first time homogeneous spaces have been used to characterize symmetry in quantum information. This establishes \textit{true randomness} on these homogeneous spaces --- if we are to sample from $\mathbb{U}/H$ according to the defined Haar measure, we will get a sample that is truly random in the sense of $\mathbb{U}/H$. We then proceed to approximations of true randomness, beginning with $t$-wise independent approximations of randomness, defining $t$-designs on $\mathbb{U}/H$ and $H$-equivalent states. We then switch to pseudorandomness, defining homogeneous space pseudorandom unitaries. We then provide an application of this apparatus to quantum machine learning, and finish off with a conclusion summarizing the work and providing an outlook. 

\begin{widetext}
\begin{minipage}{\linewidth}
\begin{table}[H]
    \centering
    
    \begin{tabular}{llll}
        \toprule
        & Classical & Quantum &  Quantum with symmetry \\
        \midrule
        True randomness & Uniform distribution & Haar measure & Haar measure on homogeneous spaces \\
        t-wise independence & t-wise independent  random variables & Quantum t-designs & Homogeneous space t-designs (\textit{this work}) \\
        
        Pseudorandomness & PRGs & PRS’s~\cite{ji2018pseudorandom}  & HPRS's (\textit{this work})\\
        & PRFs, PRPs & PRUs~\cite{ji2018pseudorandom}  & HPRUs (\textit{this work})\\
        \bottomrule
    \end{tabular}
    \caption{A quantum tug of war between randomness and symmetries while approximating true randomness}
\end{table}

\begin{figure}[H]
\centering
\includegraphics[width=0.75\textwidth]{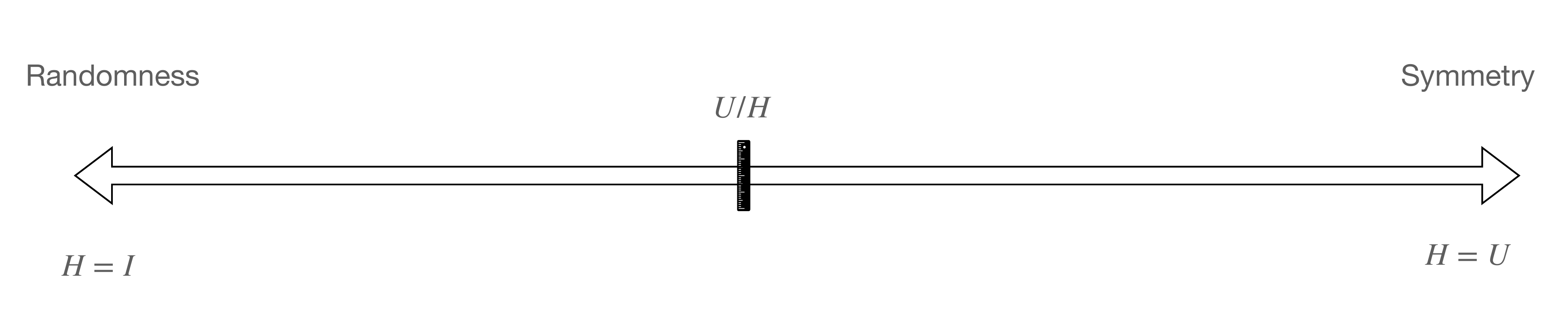}
\caption{Pictorial representation of a quantum tug of war between randomness and symmetries on homogeneous spaces.}
\label{fig:symmetry}
 \end{figure}
\end{minipage}
\end{widetext}

\section{Preliminaries}

\subsection{Haar measure on homogeneous spaces}

Let $G$ be a locally compact topological group, and $H \leq G$ be its subgroup. Then there is a (trivial) left action of $H$ on $G$, defined as a map from $H \times G \rightarrow G$ that takes $(h,g) \mapsto hg$. It is easy to check that this satisfies the axioms of a group action: the identity is present in $H$ (as it is a subgroup), and the compatibility axiom just reduces to the associativity in the group.

Now we can define the space $G/H$ as the space of all left cosets of $G$, where two elements $g_{1}, g_{2}$ of $G$ are equivalent (and hence in the same coset) if $\exists$ $h \in H$ such that $g_{1} = hg_{2}$. Note that $H$ itself forms one such coset, because of the closure of group multiplication in $H$. One often writes cosets as $gH$, where $gH$ refers to that coset which contains $g$. This space is called the homogeneous space of $H$ in $G$.

\begin{theorem}[\cite{nachbin_1985}.] Let $G$ be a locally compact Hausdorff topological group and $H \leq G$. Let $\mathrm{d}^{l}(G)$ and $\mathrm{d}^{l}(H)$ be Haar measures on $G$ and $H$ respectively.

\begin{enumerate}
    \item There is a non-zero $G$-invariant Borel measure $\mathrm{d}_{G/H}$ on $G/H$ if and only if the modular functions of $G$ and $H$ satisfy $\delta_{H} = \delta{G}|_{H}$,
    \item This measure is unique and can be normalized such that 
    \begin{equation}
        \int_{G} f(x) \mathrm{d}^{l}_{G}(x) = \int_{G/H} \bar{f}(xH) \mathrm{d}_{G/H}(xH),
    \end{equation}
where $\bar{f}$ can be obtained from $f$ using $\bar{f}(xH) = \int_{H} f(xh) \mathrm{d}^{l}_{H} (h)$.
\end{enumerate}
\end{theorem}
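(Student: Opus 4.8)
The plan is to prove both directions through the Riesz–Markov representation theorem, converting the problem of constructing a measure on $G/H$ (which is locally compact Hausdorff since $H$ is closed) into that of constructing a positive, $G$-invariant linear functional on $C_c(G/H)$, the compactly supported continuous functions on the quotient. The central object is the averaging map $P \colon C_c(G) \to C_c(G/H)$ defined exactly by the operation in the statement, $Pf(xH) = \bar f(xH) = \int_H f(xh)\,\mathrm{d}^l_H(h)$. First I would check that $P$ is well defined: replacing the representative $x$ by $xh_0$ and invoking the left-invariance of $\mathrm{d}^l_H$ shows the integral is independent of the coset representative, and a support argument shows $Pf$ is continuous with compact support. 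Thus $P$ descends functions on $G$ to functions on $G/H$, and the Weil formula asserted in part (2) is precisely the identity $\int_G f\,\mathrm{d}^l_G = \Lambda(Pf)$ for the sought functional $\Lambda$.

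The next step is the surjectivity of $P$ onto $C_c(G/H)$. Given $\phi \in C_c(G/H)$ with compact support, using local compactness and a partition of unity I would build $\psi \in C_c(G)$ with $P\psi$ strictly positive on the support of $\phi$, and then set $f = (\phi \circ \pi)\,\psi/(P\psi \circ \pi)$ on the relevant set (and $0$ elsewhere), where $\pi\colon G \to G/H$ is the quotient map; a direct computation gives $Pf = \phi$. With surjectivity in hand, one can attempt to define $\Lambda$ on all of $C_c(G/H)$ by $\Lambda(Pf) := \int_G f\,\mathrm{d}^l_G$.

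The modular condition now enters through the transformation behaviour under right translation by $h_0 \in H$. On one hand, left-invariance of $\mathrm{d}^l_G$ gives $\int_G f(xh_0)\,\mathrm{d}^l_G(x) = \delta_G(h_0)^{-1}\int_G f\,\mathrm{d}^l_G$, while on the other hand $P(R_{h_0}f) = \delta_H(h_0)^{-1}\,Pf$, where $R_{h_0}f(x) = f(xh_0)$. Since $P(\delta_H(h_0)R_{h_0}f) = Pf$, consistency of $\Lambda$ forces $\int_G \delta_H(h_0)R_{h_0}f\,\mathrm{d}^l_G = \int_G f\,\mathrm{d}^l_G$, i.e.\ $\delta_H(h_0) = \delta_G(h_0)$ for all $h_0 \in H$; this yields the \emph{necessity} of $\delta_H = \delta_G|_H$. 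For \emph{sufficiency} I would prove the key lemma that, \emph{assuming} this condition, $Pf = 0$ implies $\int_G f\,\mathrm{d}^l_G = 0$: choosing $g \in C_c(G)$ with $Pg \equiv 1$ on $\pi(\mathrm{supp}\,f)$, writing $\int_G f\,\mathrm{d}^l_G = \int_G f(x)\,Pg(xH)\,\mathrm{d}^l_G(x)$, and applying Fubini together with the change of variable $x \mapsto xh^{-1}$ produces exactly the Jacobian factor $\delta_G$ that the hypothesis converts into $\delta_H$, collapsing the expression into an integral of $Pf = 0$. This lemma makes $\Lambda$ well defined on $\ker P$-cosets, hence on all of $C_c(G/H)$ by surjectivity.

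It then remains to verify the hypotheses of Riesz: positivity of $\Lambda$ requires producing, for a given $\phi = Pf \ge 0$, a \emph{nonnegative} preimage (a nonnegative $f$ need not exist for an arbitrary representation $\phi = Pf$, so one reselects $f$ via the surjectivity construction applied to nonnegative data), after which $\Lambda \ge 0$; and $G$-invariance of the resulting measure $\mathrm{d}_{G/H}$ follows from the equivariance $P \circ L_g = L_g \circ P$ combined with left-invariance of $\mathrm{d}^l_G$, since $\Lambda(L_g\,Pf) = \Lambda(P(L_g f)) = \int_G L_g f\,\mathrm{d}^l_G = \int_G f\,\mathrm{d}^l_G = \Lambda(Pf)$. \emph{Uniqueness} up to normalization is then immediate: any $G$-invariant measure $\nu$ on $G/H$ defines via $f \mapsto \int_{G/H}Pf\,\mathrm{d}\nu$ a left-invariant positive functional on $C_c(G)$, which by the uniqueness of the Haar measure on $G$ must be a scalar multiple of $\int_G \cdot\,\mathrm{d}^l_G$, pinning down $\nu$ up to that scalar and simultaneously establishing the normalization in the Weil formula. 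I expect the \textbf{main obstacle} to be the kernel lemma ($Pf = 0 \Rightarrow \int_G f = 0$), where the modular-function hypothesis must be threaded through the Fubini-and-change-of-variables computation to match the left-Haar Jacobian on $G$ against the modular function on $H$; the positivity reselection is the secondary technical point.
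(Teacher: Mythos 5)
The paper itself does not prove this theorem --- it is imported directly from the cited reference (Nachbin's \emph{The Haar Integral}; see also Deitmar--Echterhoff) --- and your sketch is a faithful reconstruction of exactly the standard argument in those sources: the averaging map $P$, well-definedness and surjectivity of $P$, the kernel lemma $Pf = 0 \Rightarrow \int_G f\,\mathrm{d}^l_G = 0$ under the modular condition, Riesz--Markov for existence, equivariance of $P$ for $G$-invariance, and uniqueness via pulling back an invariant measure to a left-invariant functional on $C_c(G)$. Two minor points to tighten: $H$ must be assumed \emph{closed} (the paper's statement omits this, but you correctly rely on it for $G/H$ to be locally compact Hausdorff), and the necessity direction should be routed through the pullback functional $f \mapsto \int_{G/H} Pf\,\mathrm{d}\nu$ associated with an assumed invariant measure $\nu$ (which you introduce only in the uniqueness paragraph) rather than through ``consistency of $\Lambda$'', since $\Lambda$ is only known to be well defined after the modular condition has already been assumed.
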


Here we have introduced the modular function, which is a continuous group homomorphism from $G$ to $(\mathbb{R}^{+}, \times)$ defined in \cite{nachbin_1985, Echterhoff_Deitmar_2014}.

In our case, we will have $G$ to be the unitary group and $H$ to be the group of symmetries that we want to eliminate --- we can then construct a measure on this reduced coset space that is $G$ invariant (and hence uniform in the \textit{right} sense) as desired using the above theorem. Note that for this, we need to satisfy condition $1$. This is easy in our case because of the following simple lemma:

\begin{lemma}
 Let $G$ be a compact Hausdorff topological group. Then the modular function $\delta$ of $G$ is identically $1$.
\end{lemma}

\begin{proof}
    Assume not. Then the modular function must take on some value $k \neq 1$ for some $g \in G$. Now if this $k > 1$, then we can use the homomorphism structure of the modular function to write $\delta(g^{n}) = (\delta(g))^{n}$, and hence $\delta$ is arbitrarily large in $G$. Now we use the continuity of $\delta$ --- since continuous functions map compact sets to a compact image, this is a contradiction. If $k \leq 1$, we can convert it to the $k > 1$ case by noting that $\delta(e) = 1$ (group homomorphism maps identity to identity), and so $\delta(g^{-1}g) = 1$. But again using the homomorphism structure, $\delta(g^{-1}g) = \delta(g^{-1})\delta(g) = k\delta(g^{-1})$ so $\delta(g^{-1}) > 1$. 
\end{proof}

Since the unitary group is compact, all we have to do is be careful to choose $\textit{closed}$ subsets for our symmetry group, and these will be compact as well (It can be easily shown that a closed subset of a compact set is compact).

\section{Sampling from Haar Measure on \texorpdfstring{$\mathbb{U}/H$}{U/H}}
Let $G$ be a locally compact topological group (of which $\mathbb{U}$ forms one example), assume $H$ is a closed subgroup, and denote for the Haar measures on $G/H$ and $G$ $\mathrm{d}_{G/H}$ and $\mathrm{d}_{G}$ respectively. Let $f$ be a function from $G/H \rightarrow \mathbb{R}$. Then it can be shown that the following integrals are equal to each other \cite{nachbin_1985, Echterhoff_Deitmar_2014}: 
\begin{equation}
    \int_{G/H} f(xH) \mathrm{d}_{G/H}(xH) = \int_{G} F(x) \mathrm{d}^{l}_{G}(x),
    \label{inversion}
\end{equation}
where
\begin{equation}
    f(xH) = \int_{H} F(xh) \mathrm{d}^{l}_{H} (h).
\end{equation}

It can further be shown that the mapping $F \mapsto f$ is a well-defined surjection, so that we can always construct at least one $F$ for every $f$ \cite{Echterhoff_Deitmar_2014}. Let $\mathrm{Supp}(f) \leq G/H$ be the support of $f$, and let $K = \pi^{-1}(\mathrm{Supp}(f))$, %
where $\pi$ is the canonical map $\pi : G \rightarrow G/H$. Let $\phi$ be a continuous function on $G$ such that $\phi(K) = 1$ and $\phi \geq 0$. Then we can define for $x \in G$,

\begin{equation}
    F(x) = \frac{f(xH)\phi(x)}{\int_{H} \phi(xh) \mathrm{d}_{H}(h)}.
\end{equation}

So then the expression 
\begin{align}
    \mu(S) &= \int_{G/H} 1_{S}(xH) \mathrm{d}_{G/H}(xH)
    \nonumber\\
    &= \int_{G} \frac{1_{\tilde{S}}(x)\phi(x)}{\int_{H} \phi(xh) \mathrm{d}_{H}(h)} \mathrm{d}_{G}(x) = \int_{G} 1_{\tilde{S}}(x) \mathrm{d}_{G}(x),
    \label{bigsubgrouphaarres}
\end{align}
where we have chosen $\phi(G) = 1$, which is clearly continuous, and then normalized the measure. Here the set $\bar{S} = \pi^{-1}(S)$. This equation provides a way to perform Haar sampling on $\mathbb{U}/H$ given that we can Haar sample on $\mathbb{U}$, since it relates the volume that the measure assigns to $S \leq \mathbb{U}/H$ with $\pi^{-1}(S) \leq \mathbb{U}$. This immediately also gives us a physical realization of Haar randomness on $\mathbb{U}/H$ using random quantum circuits and post-processing: if we et $W$ be a random quantum circuit such that $W$ is approximately $\mathrm{Haar}$ on $\mathbb{U}(d)$. Then $W$ followed by post-processing wherein we pick the equivalence class that the particular realization of $W$ belongs to is approximately Haar random on $\mathbb{U}(d)/H$.

\section{\texorpdfstring{$t$}{t}-Designs for \texorpdfstring{$\mathbb{U}/H$}{U/H}}

While the process of generating Haar random states is resource intensive, in numerous scenarios, we only need to access the lower-order moments of the Haar measure. In such situations, $t$-designs have emerged as a useful approach for representing Haar random distributions, accurately replicating the moments of the Haar measure up to the $t$-th order. This motivates us to
define state $t$-designs as well where we take into account symmetry under $H$. However, here we must be careful to make sure that the quantities we are dealing with are well-defined. Let us build this up in parts, starting with state $t$-designs, and then providing a general homogeneous space $t$-design.

\subsection{State \texorpdfstring{$t$}{t}-designs}
We begin with a definition for $H$-equivalence of states.

\begin{definition}
    Two states $\ket{\psi_{1}}, \ket{\psi_{2}}$ are $H$-equivalent if $\exists h \in H$ such that $\ket{\psi_{1}} = h \ket{\psi_{2}}$.
    \label{hequivstate}
\end{definition}

It is easy to check that this is an equivalence relation. Reflexivity is guaranteed by the fact that $H$ is a subgroup and has an identity, the existence of an inverse in $H$ guarantees symmetry, and transitivity follows from the composition of unitary operations. Thus we have a set of equivalence classes for the states as well, and in particular, there exists $\tau$ that maps a pure state to its equivalence class. Now let us define the action of an element of $\mathbb{U}/H$ on a state:

\begin{definition}
    Let $x \in \mathbb{U}/H$, and let $\tau$ be the map of a state to its equivalence class under $H$. Then we define $x\ket{\psi}$ as $\tau(U \ket{\psi})$, where $U$ is some representative of $x$.
    \label{eqcmult}
\end{definition}

We can check that this is independent of the representative as well, because if $U$ and $V$ are representatives of $x$, then $U = hV$ for some $h \in H$, and so $U\ket{\psi} = hV\ket{\psi}$ and so $\tau(U\ket{\psi}) = \tau(hV \ket{\psi})$. But $\tau (hV \ket{\psi}) = \tau (V \ket{\psi})$ since $\tau$ maps $H$-related states to the same equivalence class.

\begin{definition}
    Let $(p_{i}, \ket{\psi_{i}})_{i \in \mathcal{I}}$ be a distribution of states (so that $\sum_{i \in \mathcal{I}} p_{i} = 1$), $H$ be a closed (more generally, unimodular) subset of $\mathbb{U}$. Then $(p_{i}, \ket{\psi_{i}})_{i 
    \in \mathcal{I}}$ is a $H$-homogeneous space state $t$-design if 
    \begin{equation}
        \sum_{i \in \mathcal{I}} p_{i} \Big(\ketbra{\psi_{i}}\Big)^{\otimes t} = \int_{\mathbb{U}/H}  \Big( \phi \bigg(\ketbra{x} \bigg)\Big)^{\otimes t} \dd_{\mathbb{U}/H}(x),
    \end{equation}
where $\dd_{\mathbb{U}/H}$ is the Haar measure on $\mathbb{U}/H$, and $\ket{x}$ is the equivalence class of states (which we have extended to density matrices here) that one gets when computing $x\ket{0}$ as in Definition \ref{eqcmult}. We have also defined $\phi$ to be an arbitrary (possibly probabilistic) map that sends equivalence classes of density matrices to a particular density matrix. \end{definition}

Operationally, we can think of this as an ensemble of states which is indistinguishable from the Haar distribution on $\mathbb{U}/H$ \textit{up to left multiplication by elements of $H$} when only $t$ copies of states are used. We should note however that some of the tricks that we possess with unitary designs over $\mathbb{U}$ do not quite work in the same way. We discuss this explicitly when we try to define a measure of expressibility that is similar to that in \cite{Sim_2019} in Appendix~\ref{sec:limitations}. Similar to the exact case, we can easily extend the definition of a $t$-design to the approximate case:
\begin{definition}
    Let $\{(p_{i}, \ket{\psi_{i}})\}_{i \in \mathcal{I}}$ be a distribution of states (so that $\sum_{i \in \mathcal{I}} p_{i} = 1$), $H$ be a closed (more generally, unimodular) subset of $\mathbb{U}$. Then $(p_{i}, \ket{\psi_{i}})_{i 
    \in \mathcal{I}}$ is an $\epsilon$-approximate homogeneous space state $t$-design if $\exists$ $\phi$ such that
    \begin{align}
        \left\|  \sum_{i \in \mathcal{I}} p_{i}  \Big(\ketbra{\psi_{i}}\Big)^{\otimes t} -\int_{\mathbb{U}/H}  \Big( \phi \bigg(\ketbra{x}\bigg) \Big)^{\otimes t} \dd_{\mathbb{U}/H}(x) \right\| < \epsilon.
    \end{align}
\end{definition}

It is important to appreciate that one can have two $\mathbb{U}/H$-equivalent state $t$-designs with different expectation values on a variety of operators. 

\subsection{Homogeneous space \texorpdfstring{$t$}{t}-designs}

 Now we proceed to define a homogeneous space unitary $t$-design directly on $\mathbb{U}/H$ similar to unitary $t$-designs on $\mathbb{U}$. Recall that these are subsets of $\mathbb{U}$ such that any polynomial function of degree $t$ (or less) in $\mathbb{U}$ averaged over the Haar measure can be replicated using the subset. When we move to homogeneous spaces, it is unclear as to how to define these polynomial functions at all. We begin by noting the following useful theorem, often called the \textit{quotient manifold theorem}.

\begin{theorem}
    Let $G$ be a Lie group and $M$ be a smooth manifold with a group action $\cdot: G \times M \rightarrow M$ such that $\cdot$ is free, smooth and proper. Then $M/G$ is a manifold, and it has a unique differentiable structure such that $\pi: M \rightarrow M/G$ is a submersion.
\end{theorem}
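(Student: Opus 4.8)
The statement is the classical \emph{Quotient Manifold Theorem}, so the plan is to follow the standard slice-chart construction from differential geometry (e.g.\ the treatment in Lee's \emph{Introduction to Smooth Manifolds}), which fits our setting since we only need $G$ to be a Lie group acting freely and properly, not compactly. Set $k = \dim G$ and declare the target dimension $m = \dim M - k$; the goal is to exhibit $M/G$ as a smooth $m$-manifold. I would first dispatch the topological hypotheses. The projection $\pi$ is automatically an open map, because for open $U \subseteq M$ the saturation $\pi^{-1}(\pi(U)) = \bigcup_{g \in G} g\cdot U$ is a union of open sets; openness of $\pi$ then transports second countability from $M$ to $M/G$. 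For the Hausdorff property I would invoke properness directly: the map $\Theta: G \times M \to M \times M$, $(g,p) \mapsto (g\cdot p, p)$, is proper, and a proper map into a locally compact Hausdorff space is closed, so the orbit relation $\Theta(G\times M) \subseteq M \times M$ is closed; a closed orbit relation together with an open quotient map forces $M/G$ to be Hausdorff.

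The smooth structure is the heart of the argument, and I would build it from adapted (slice) charts. The first step is to show that each orbit $G\cdot p$ is an embedded submanifold of $M$ diffeomorphic to $G$: freeness makes the orbit map $G \to M$, $g \mapsto g\cdot p$, an injective immersion, and properness upgrades this injective immersion to a (properly) embedded submanifold. Next, around a fixed orbit I would choose coordinates that straighten it into a coordinate $k$-plane, so that in a chart $(U,\varphi)$ with coordinates $(x^1,\dots,x^k,y^1,\dots,y^m)$ the orbit through $p$ is the slice $\{y = 0\}$. The transverse slice $S = \{x = 0\}$ is then an $m$-dimensional submanifold meeting the orbit of $p$ at the single point $p$, and I would take the candidate chart on $M/G$ to be $\pi|_S$ composed with the $y$-coordinate.

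The main obstacle — and the one place properness is indispensable beyond Hausdorffness — is shrinking $S$ so that $\pi|_S$ is \emph{injective}, i.e.\ so that each nearby orbit meets the slice in \emph{at most} one point rather than several. This is precisely a local-slice statement: one argues that if no such neighborhood existed there would be sequences $p_i, p_i' \to p$ on $S$ with $p_i' = g_i \cdot p_i$ and $g_i$ bounded away from the identity, and properness (which makes the relevant set of group elements relatively compact) lets one extract a limit $g_\infty \neq e$ fixing $p$, contradicting freeness. Once $\pi|_S$ is an injective continuous open map, it is a homeomorphism onto an open subset of $M/G$, and its inverse followed by $y$ gives a genuine chart. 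With the charts in hand, the remaining steps are comparatively routine: two overlapping slice charts differ by the action of a (locally smooth) group element, so the transition maps are smooth; $\pi$ is a submersion because in adapted coordinates it is literally the projection $(x,y)\mapsto y$; and uniqueness of the differentiable structure follows because any two smooth structures rendering $\pi$ a submersion admit local smooth sections, making the identity map of $M/G$ smooth in both directions, hence a diffeomorphism.
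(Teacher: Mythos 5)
The paper never proves this statement: it is quoted as the standard \emph{Quotient Manifold Theorem} (the result in Lee's \emph{Introduction to Smooth Manifolds}, Thm.~21.10) and immediately used as a black box, so your proposal can only be measured against the textbook argument you are reconstructing --- and in outline yours is exactly that argument (openness of $\pi$ and second countability, Hausdorffness from properness of $(g,p)\mapsto(g\cdot p,p)$, embedded orbits, slice charts, smooth transitions, local sections for uniqueness).

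There is, however, one genuine gap, and it sits precisely at the step you yourself identify as the heart of the matter. Negating the local-slice property only yields sequences $p_i, p_i' \to p$ in $S$ with $p_i' = g_i\cdot p_i$ and $g_i \neq e$; you are \emph{not} entitled to assume the $g_i$ are bounded away from the identity. Your compactness argument (extract a limit $g_\infty \neq e$ fixing $p$, contradict freeness) disposes only of the subsequence of $g_i$ that stays outside some fixed neighborhood of $e$. The complementary case $g_i \to e$ is the delicate one, and properness plus freeness alone give no contradiction there: the limit $g_\infty = e$ fixes $p$ harmlessly. The standard repair is a separate inverse-function-theorem step: show that the action map $G \times S \to M$, $(g,s)\mapsto g\cdot s$, is a local diffeomorphism at $(e,p)$ --- its differential is surjective because the orbit directions and the slice directions together span $T_pM$, and $\dim G + \dim S = \dim M$ --- hence injective on a neighborhood of $(e,p)$. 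Then $g_i \to e$ together with $p_i,\, g_i\cdot p_i \in S$ forces $g_i = e$ for large $i$, contradicting $p_i \neq p_i'$. With this lemma inserted, your two-case structure (near $e$ by local injectivity, far from $e$ by properness) becomes the canonical proof; note that the same lemma is also what justifies your later assertion that overlapping slice charts ``differ by the action of a (locally smooth) group element,'' which otherwise rests on the unproved claim that the group element relating nearby slices depends smoothly on the point.
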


The details of this theorem are not too important for our purposes, other than the fact that it means there is a coordinate structure on $\mathbb{U}/H$ that resembles $\mathbb{R}^{d}$ for some $d$. A priori it is not clear how many coordinate charts are required to cover $\mathbb{U}/H$, but since the problems that we deal with are rooted in physics, we expect to largely avoid pathological cases. 

Before we discuss our definition of the homogeneous space $t$-design, let us discuss its utility. Its main purpose is to replace integrals over random ensembles of $\mathbb{U}$ with an expectation value over a finite number of summands. If the homogeneous space has only a finite number of equivalence classes, then in our case, this takes care of itself, but we also have the case where there are infinitely many such classes. The integrands are generally polynomials in the entries of the unitary matrices in the standard case. Let us use a similar idea to define the $t$-design in the case of the homogeneous space. Let us hence define the Homogeneous space $t$-design in the following way:

\begin{definition}
    Let $X \leq \mathbb{U}/H$, and $\Phi = \{\phi_{i}, i \in \mathcal{I}\}$ be an atlas for $\mathbb{U}/H$. Then $X$ forms a homogeneous space $t$-design if

    \begin{equation}
        \frac{1}{|X|} \sum_{x \in X} f(\Phi(x)) = \int_{\mathbb{U}/H} f(\Phi(x)) \mathrm{d}_{\mathbb{U}/H}(x)
    \end{equation}
    for all $f(x)$ with $f$ polynomial in $x$ with degree up to $t$.
\end{definition}

Notice that at the level of the unitaries, we are restricted to functions that take the same value for each of the members of a particular equivalence class under the symmetry group $H$ --- which in general need not be (and are likely not) polynomials in the space of unitary matrices. This is what we expect: if we consider any function that respects the symmetry of the group, then such a function is constant on the members of the equivalence class as well. Polynomial approximations to such functions in $\mathbb{U}/H$ form one example of a possible integrand.

While the definition looks rather restricted, we must be realistic --- we have lost a lot of structure in moving from $\mathbb{U}(d)$ to $\mathbb{U}/H$. In applications it is often useful in this framework to consider functions on $\mathbb{U}/H$ that are minima or maxima over the equivalence class elements of functions on $\mathbb{U}$ -- that is all $f$ such that $f:x \mapsto f(x) = \max g(U)$ for $U \in xH$ (or equivalently the minimum). It is also possible to define a generalized $t$-design when considering these functions as 

\begin{definition}
    Let $X \leq \mathbb{U}/H$, and $x = \{\phi_{i}: i \in \mathcal{I}\}$ be an atlas for $\mathbb{U}/H$. Then $X$ forms an extrema function homogeneous space $t$-design if

    \begin{equation}
        \frac{1}{|X|} \sum_{x \in X} f(x) = \int_{\mathbb{U}/H} f(x) \mathrm{d}_{\mathbb{U}/H}(x)
    \end{equation}
for all $f(x) = g(U)$ for $U \in xH$ with $g$ polynomial in the entries of $U$ with degree up to $t$.
\end{definition}

\section{Pseudorandom states and unitaries}
In this section, we adapt the definitions of pseudorandom unitaries (PRU) and pseudorandom states (PRS) from  Ref.~\cite{ji2018pseudorandom} %
for homogeneous spaces.
Consider a Hilbert space $\mathcal{H}$ and a key space $\mathcal{K}$, both dependent on a security parameter $\kappa$. Let $\mu_H$ be the Haar measure on $\mathbb{U\left(\mathcal{H}\right)}/H$ for a closed subgroup $H$, and let $\tau$ be an arbitrary (possibly probabilistic) map that assigns a representative in $\mathbb{U}$ for any equivalence class $x \in \mathbb{U}/H$.

\begin{definition}[Homogeneous space pseudorandom unitary operators (HPRUs)]
A family of unitary operators $\{U_k \in \mathbb{U\left(\mathcal{H}\right)}/H\}_{k \in \mathcal{K}}$ is homogeneous space pseudorandom if the following two conditions hold:

\begin{enumerate}
\item \textit{Efficient computation}: There exists an efficient quantum algorithm $Q$ such that for all $k\in \mathcal{K}$ and any pure state $\ket{\psi} \in \mathcal{H}$, $Q(k, \ket{\psi}) = U_k \ket{\psi}$.
\item \textit{Pseudorandomness}: $U_k$ with a random key $k$ is computationally indistinguishable from a homogeneous space Haar random unitary operator. More precisely, for any efficient quantum algorithm $A$ that makes at most polynomially many queries to the oracle,
\[
\left| \Pr_{k \gets K}[A^{U_k}(1^\kappa) = 1] - \Pr_{U \gets \nu}[A^U(1^\kappa) = 1] \right| = \operatorname{negl}(\kappa).
\]

where sampling according to $\nu$ refers to sampling equivalence classes in $\mu_{H}$ and then using the map $\tau$ to assign a representative unitary matrix. Note that each $\tau$ gives rise to a different $\nu$, and so in particular $\nu = \nu(\tau)$. For an ensemble to be an HPRU, we only require that there exists a single $\tau$ such that the sampling according to $\mu$ for that map satisfies the required relation.

\end{enumerate}
\end{definition}
Here $\operatorname{negl}(\kappa)$ denotes a negligible function, which can be any function that decays faster than an inverse polynomial.

This definition states that a family of unitary operators is considered pseudorandom if it is both efficiently computable and indistinguishable from Haar random unitary operators for an observer with a bounded computational power. Now let us proceed to define pseudorandom quantum states in a similar vein. Let us retain the definition of $\tau$ as an arbitrary (possibly probabilistic) map that assigns a representative to an equivalence class of unitaries, and $\nu$ as a sampling strategy that combines the Haar measure with $\tau$.

\begin{definition}[Homegeneous space pseudorandom quantum states (HPRSs)]
Let $\kappa$ be the security parameter. Consider a Hilbert space $\mathcal{H}$ and a key space $\mathcal{K}$, both dependent on $\kappa$. A keyed family of quantum states ${ \ket{\phi_k} \in \mathcal{S}(\mathcal{H}) }_{k \in \mathcal{K}}$ is defined as homogeneous space pseudorandom if it satisfies the following conditions:

\begin{enumerate}
\item \textit{Efficient generation}: There exists a polynomial-time quantum algorithm $G$ capable of generating the state $\ket{\phi_k}$ when given the input $k$. In other words, for every $k \in \mathcal{K}$, $G(k) = \ket{\phi_k}$.
\item \textit{Pseudorandomness}: When given the same random $k \in \mathcal{K}$, any polynomially bounded number of copies of $\ket{\phi_k}$ are computationally indistinguishable from the same number of copies of a Haar random state. More specifically, for any efficient quantum algorithm $A$ and any $m \in \operatorname{poly}(\kappa)$,
\[\left| \Pr_{k \gets K} [A(\ket{\phi_k}^{\otimes m}) = 1] - \Pr_{U \gets \nu} [A(U\ket{0}^{\otimes m}) = 1] \right| = \operatorname{negl}(\kappa).\]
\end{enumerate}
\end{definition}

In this definition, a keyed family of quantum states is considered pseudorandom if it can be generated efficiently and appears statistically indistinguishable from homogeneous space Haar random states to an observer with limited computational resources.

\begin{lemma}
    Consider two homogeneous spaces $\mathbb{U\left(\mathcal{H}\right)}/H_1$ and $\mathbb{U\left(\mathcal{H}\right)}/H_2$, such that $H_1 \leq H_2$. Every HPRU corresponding to $\mathbb{U\left(\mathcal{H}\right)}/H_1$ is an HPRU corresponding to $\mathbb{U\left(\mathcal{H}\right)}/H_2$. Similarly, every HPRS corresponding to $\mathbb{U\left(\mathcal{H}\right)}/H_1$ is an HPRS corresponding to $\mathbb{U\left(\mathcal{H}\right)}/H_2$. 
\end{lemma}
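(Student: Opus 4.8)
The plan is to reduce both claims to a single statement about distributions over $\mathbb{U}(\mathcal{H})$, exploiting the fact that $\mathbb{U}/H_2$ is a coarsening of $\mathbb{U}/H_1$. First I would introduce the canonical projection $\pi_{12} : \mathbb{U}/H_1 \to \mathbb{U}/H_2$ sending the $H_1$-class of a unitary to its $H_2$-class; this is well defined precisely because $H_1 \leq H_2$, so any two $H_1$-equivalent unitaries are a fortiori $H_2$-equivalent. The key structural fact I would establish is that $\mu_{H_2} = (\pi_{12})_* \mu_{H_1}$: the pushforward of a $\mathbb{U}$-invariant Borel probability measure is again $\mathbb{U}$-invariant, and by the uniqueness clause of the invariant-measure theorem quoted above, the normalized $\mathbb{U}$-invariant measure on $\mathbb{U}/H_2$ is unique, hence equals $(\pi_{12})_*\mu_{H_1}$.

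Next, given a representative map $\tau_1$ witnessing that $\{U_k\}$ is an HPRU for $\mathbb{U}/H_1$, I would construct a representative map $\tau_2$ for $\mathbb{U}/H_2$ producing the same law on $\mathbb{U}$. Since all the spaces involved are compact metrizable, the disintegration theorem supplies conditional measures $\{\mu_{H_1}(\cdot \mid y)\}_{y \in \mathbb{U}/H_2}$ supported on the fibers $\pi_{12}^{-1}(y)$. I then define the (probabilistic) map $\tau_2$ on a coarse class $y$ by drawing a fine class $x \gets \mu_{H_1}(\cdot \mid y)$ and outputting $\tau_1(x)$. By the defining property of disintegration together with $\mu_{H_2} = (\pi_{12})_*\mu_{H_1}$, sampling $y \gets \mu_{H_2}$ and applying $\tau_2$ is exactly equal, as a distribution on $\mathbb{U}$, to sampling $x \gets \mu_{H_1}$ and applying $\tau_1$; that is, $\nu_2(\tau_2) = \nu_1(\tau_1)$. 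The definition only demands existence of some $\tau_2$, and probabilistic maps are explicitly permitted, so $\tau_2$ is admissible.

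With $\nu_2(\tau_2) = \nu_1(\tau_1)$ the pseudorandomness conditions transfer verbatim: for any efficient $A$, the left term $\Pr_{k}[A^{U_k}=1]$ is literally unchanged between the two interpretations (it depends only on the fixed unitary the algorithm implements), while the right term satisfies $\Pr_{U \gets \nu_2(\tau_2)}[A^U=1] = \Pr_{U \gets \nu_1(\tau_1)}[A^U=1]$, so the distinguishing advantage is $\operatorname{negl}(\kappa)$ by hypothesis. The HPRS case is identical after replacing oracle access by $m$ copies of $U\ket{0}$. The efficient-computation (resp.\ efficient-generation) condition requires no modification, since the same algorithm $Q$ (resp.\ $G$) produces the same physical action (resp.\ state), and correctness with respect to the $H_1$-class automatically implies correctness with respect to the containing $H_2$-class.

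The only genuine subtlety I foresee is the measure-theoretic step: verifying $(\pi_{12})_*\mu_{H_1} = \mu_{H_2}$ and the existence and measurability of the disintegration. Both are clean here because $H_1, H_2$ are closed, hence compact, subgroups of the compact group $\mathbb{U}(\mathcal{H})$; by the Lemma on modular functions, condition~1 of the invariant-measure theorem holds throughout, the fibers carry the invariant measure of the compact homogeneous space $H_2/H_1$, and the disintegration follows from standard results on Radon measures over compact metric spaces. Everything downstream is then purely formal.
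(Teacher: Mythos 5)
Your proposal is correct, and its germ is the same observation the paper's one-line proof gestures at: because $H_1 \leq H_2$, the space $\mathbb{U}(\mathcal{H})/H_2$ is a coarsening of $\mathbb{U}(\mathcal{H})/H_1$ (this is what the paper's notation $\mathbb{U}/H_2 \leq \mathbb{U}/H_1$ must mean --- it is a quotient relation, not a subset relation, and your projection $\pi_{12}$ is the precise statement of it). The difference is that the paper stops there, while you supply the content that actually makes the claim a theorem about the stated definitions. The HPRU/HPRS pseudorandomness condition is relative to a measure $\nu = \nu(\tau)$ that depends on a choice of representative map, and existence of a suitable $\tau_2$ for $H_2$ given $\tau_1$ for $H_1$ is exactly what needs proving; the paper never addresses it. Your two steps --- (i) $(\pi_{12})_*\mu_{H_1} = \mu_{H_2}$, by equivariance of $\pi_{12}$ plus the uniqueness clause of the invariant-measure theorem (applicable since all modular functions are trivial on the compact group $\mathbb{U}(\mathcal{H})$), and (ii) the disintegration-based construction of $\tau_2$ so that $\nu_2(\tau_2)$ and $\nu_1(\tau_1)$ coincide as distributions on $\mathbb{U}(\mathcal{H})$ --- are both sound, and the second could even be made more concrete by sampling the fiber via Haar measure on $H_2/H_1$, as you note. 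Once $\nu_2(\tau_2) = \nu_1(\tau_1)$, the transfer of the distinguishing bound and of the efficiency conditions is indeed verbatim. In short: same route as the paper in outline, but yours is the proof the paper should have written.
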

\begin{proof}
    Since $H_1 \leq H_2$, we have $\mathbb{U\left(\mathcal{H}\right)}/H_2 \leq \mathbb{U\left(\mathcal{H}\right)}/H_1$ and thus follows the above claim. 
\end{proof}

\begin{remark}
    Every PRS is an HPRS and similarly every PRU is an HPRU. This is because every subgroup $H$ contains identity as one of its elements.  
\end{remark}

\section{Application: The expressibility of parameterized  quantum circuits}

In this section, we will finally use the apparatus of homogeneous spaces to a useful NISQ task: quantum machine learning. The learning task we are interested in is as follows: we are given some input data describing properties of a system that are invariant under some symmetry transformation of the states, and want to try to learn the states which have this value of the property. For instance, if the property for which we are trying to learn the states is the entanglement (say we are given several measures of entanglement as our input data), then the symmetry group in question could consist of local operations on each subsystem (local unitary operations), as well as swapping around the subsystems. In this case, we do not mind which of the symmetry-related states the circuit outputs --- that is, as long as the state generates an output which has a representative in every equivalence class in the homogeneous space that is generated by this symmetry, it forms a suitable ansatz for our purposes.

Let us define a metric for the circuit ansatz in this symmetry-equivalent machine learning problem. The way we are going to do this is to measure how close the output distribution of states from this circuit ansatz is to every single equivalence class in the homogeneous space, with the measure of closeness defined ahead as a minimum distance to the elements in the class. Given a circuit ansatz $\mathcal{A}$, define the expressibility as (it can be shown equivalently) either

\begin{align}
    E^{\mathbb{U}/H}_{\mathcal{A}} &= \lim_{N \rightarrow \infty} \frac{1}{N} \sum_{i=1}^{N} \mathrm{min}_{{\theta}}  \bigg( \mathcal{D}_{\mathbb{U}/H} \bigg(\ket{i}_{\mathbb{U}/H},  \ket{\psi^{(i)}_{\mathcal{A}} (\theta)} \bigg) \bigg)
    \label{defk}
\end{align}
or 
\begin{align}   E^{\mathbb{U}/H}_{\mathcal{A}} &= \int_{{\theta}} \int_{i \in V^{(\mathbb{U}/H)}({\theta})} \mathcal{D}_{\mathbb{U}/H} \bigg(\ket{i}_{\mathbb{U}/H},  \ket{\psi_{\mathcal{A}} ({\theta})} \bigg) \nonumber\\
&\qquad\qquad \mathrm{d} \mu_{\mathbb{U}/H} (i) \mathrm{d}\theta.
\label{jfExpr}
\end{align}
Here $\ket{i}_{\mathbb{U}/H}$ is an arbitrary equivalence class of states generated by choosing $U \in \mathbb{U}/H$ according to the Haar measure and computing $\{\tilde{U}\ket{0} | \tilde{U} \in \pi^{-1}(U)\}$,  $\ket{\psi_{\mathcal{A}} ({\theta})}$ is the output ket of $\mathcal{A}$ for parameter choice ${\theta}$, and $V^{\mathbb{U}/H}({\theta})$ is the Voronoi cell produced using the distance $\mathcal{D}_{\mathbb{U}/H}$ around the state $\ket{\psi_{\mathcal{A}}({\theta})}$.  Note importantly that a \textit{smaller} value of $E$ defines a more expressible circuit.

The proof of their equivalence follows from the strong law of large numbers since 

\begin{align}
\
   E_{\mathcal{A}} &\xrightarrow{\mathrm{a.s.}} \mathbb{E}\left[\mathrm{min}_{\theta}  \bigg( \mathcal{D}_{G/H} \bigg(\ket{\psi^{(i)}_{\mathrm{Haar}}}_{G/H},  \ket{\psi^{(i)}_{\mathcal{A}} (\theta)} \bigg)\bigg)\right] \nonumber\\
   \implies
   \nonumber\\
   E_{\mathcal{A}} &= \int_{i \in G/H} \mathrm{min}_{\theta} \mathcal{D}_{G/H} \bigg(\ket{i}_{G/H},  \ket{\psi_{\mathcal{A}} (\theta)} \bigg) \mathrm{d} \mu_{G/H} (i),
\end{align}
which reduces to \Cref{jfExpr} as one partitions the volume into Voronoi cells, noting that the minimum value is achieved for a particular $\theta$ and Haar random state only if the sampled state is in the Voronoi cell of $\theta$.

Here the function $\mathcal{D}_{\mathbb{U}/H}$ is not a usual norm, since one of its arguments is sampled from $\mathbb{U}/H$, and the other (the state prepared by the ansatz) is in general in $\mathbb{U}$. We hence need a method to measure how far the equivalence class of states under a certain quotient map (say $\pi$) is from a particular state. For this we choose the most natural approach with 

\begin{equation}
        \mathcal{D}_{\mathbb{U}/H} \bigg(\ket{i}_{\mathbb{U}/H},  \ket{j} \bigg) = \min_{k} \: \mathcal{D} \bigg( \ket{k}, \ket{j} \bigg) \: k \in \pi^{-1}(\ket{i}_{\mathbb{U}/H}),
        \label{distance}
\end{equation}
where $\mathcal{D}$ is some distance function on $\mathbb{U}$. Note that in all of this, $H$ must be an explicitly defined subgroup of $\mathbb{U}$ --- for instance, we consider $\{I, X\}$ and $\{I,Y\}$ inequivalent, although they are both isomorphic to $\mathbb{Z}_{2}$. Let us now check that this definition gives us some of the results that we expect. We begin with the simple observation

\begin{theorem}
    Let $\mathcal{A}$ be an ansatz, and let $\alpha$ be a set of parametrized gates such that for all $a \in \alpha$, $a({0}) = \mathrm{id}$. Then $E^{\mathbb{U}/H}_{\mathcal{A}} \geq E^{\mathbb{U}/H}_{\mathcal{A} \cup \alpha}$.
\end{theorem}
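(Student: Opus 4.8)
The plan is to exploit the fact that the augmented ansatz $\mathcal{A} \cup \alpha$ can reproduce every state that $\mathcal{A}$ produces, so that its expressibility (which \emph{decreases} as the circuit becomes more expressive) can only improve. The whole argument is a monotonicity statement built on a nesting of reachable sets.

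First I would set up the parameter spaces. Write $\Theta_{\mathcal{A}}$ for the parameter domain of $\mathcal{A}$ and $\Theta_{\alpha}$ for that of the appended gates, so that the parameters of $\mathcal{A} \cup \alpha$ live in $\Theta_{\mathcal{A}} \times \Theta_{\alpha}$. The key observation is that $\theta \mapsto (\theta, 0)$ embeds $\Theta_{\mathcal{A}}$ into $\Theta_{\mathcal{A}} \times \Theta_{\alpha}$, and because every $a \in \alpha$ satisfies $a(0) = \mathrm{id}$, the appended gates collapse to the identity along this embedding. Hence
\begin{equation}
\ket{\psi_{\mathcal{A}\cup\alpha}(\theta, 0)} = \ket{\psi_{\mathcal{A}}(\theta)}
\end{equation}
for every $\theta \in \Theta_{\mathcal{A}}$, which shows that the set of states reachable by $\mathcal{A}$ is contained in the set reachable by $\mathcal{A}\cup\alpha$.

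Next I would apply this nesting pointwise. Fixing any sampled equivalence class $\ket{i}_{\mathbb{U}/H}$, the minimisation defining the inner term of \Cref{defk} for $\mathcal{A}\cup\alpha$ ranges over a superset of the states available to $\mathcal{A}$, so its minimum is no larger:
\begin{equation}
\min_{\theta'} \mathcal{D}_{\mathbb{U}/H}\left(\ket{i}_{\mathbb{U}/H}, \ket{\psi^{(i)}_{\mathcal{A}\cup\alpha}(\theta')}\right) \leq \min_{\theta} \mathcal{D}_{\mathbb{U}/H}\left(\ket{i}_{\mathbb{U}/H}, \ket{\psi^{(i)}_{\mathcal{A}}(\theta)}\right).
\end{equation}
I would then average this inequality over the $N$ sampled classes and let $N \to \infty$; since it holds term by term, it survives both the finite average and the limit, yielding $E^{\mathbb{U}/H}_{\mathcal{A}} \geq E^{\mathbb{U}/H}_{\mathcal{A}\cup\alpha}$.

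I do not expect a serious obstacle here: everything rests on the inclusion of reachable sets that the condition $a(0) = \mathrm{id}$ guarantees. The only point needing mild care is that the inner minima be genuinely attained, so that the comparison of minimisation over nested sets is rigorous; if attainment fails one simply replaces $\min$ by $\inf$ and uses that the infimum over a superset is no larger. Both are immediate when $\mathcal{D}_{\mathbb{U}/H}$ is continuous and the parameter domains are compact, which is the physically relevant setting.
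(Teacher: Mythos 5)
Your proposal is correct, and its heart is the same as the paper's: the observation that setting the appended-gate parameters to zero makes $\mathcal{A}\cup\alpha$ reproduce exactly the states of $\mathcal{A}$, giving the nesting $S_{\mathcal{A}} \subseteq S_{\mathcal{A}\cup\alpha}$ of reachable sets. Where you diverge is in how that nesting is converted into the inequality on $E^{\mathbb{U}/H}$: the paper works with the Voronoi-cell integral form in \Cref{jfExpr} and disposes of the conclusion in one clause (``the Voronoi cells are strictly more granular''), whereas you work with the operational definition in \Cref{defk}, proving the pointwise inequality $\min_{\theta'} \mathcal{D}_{\mathbb{U}/H}(\ket{i}_{\mathbb{U}/H}, \ket{\psi^{(i)}_{\mathcal{A}\cup\alpha}(\theta')}) \leq \min_{\theta} \mathcal{D}_{\mathbb{U}/H}(\ket{i}_{\mathbb{U}/H}, \ket{\psi^{(i)}_{\mathcal{A}}(\theta)})$ for each sampled class and then passing it through the average and the $N\to\infty$ limit. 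Your route is more explicit and self-contained: it unpacks precisely the fact that the Voronoi-granularity remark is silently relying on (a minimum over a superset cannot increase), and your caveat about replacing $\min$ by $\inf$ when attainment is not guaranteed is a point of rigor the paper skips entirely. The paper's route, by contrast, is shorter and stays in the integral picture that the rest of its expressibility results are phrased in, but it leans on the (only sketched) equivalence of \Cref{defk} and \Cref{jfExpr}; your argument needs only \Cref{defk} itself.
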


\begin{proof}
    Let $S_{\mathcal{A}} = \{\ket{\psi_{\mathcal{A}}(\theta)}| \theta  \in \Theta_{\mathcal{A}}\}$ and $S_{\mathcal{A} \cup \alpha} = \{\ket{\psi_{\mathcal{A} \cup \alpha}(\theta)} | \theta  \in \Theta_{\mathcal{A} \cup \alpha}\}$, where $\Theta$ represents the range of the parameters in the circuit. But since setting $\theta_{i} = 0$ for all $\theta_{i} \in \alpha$ reproduces exactly $S_{\mathcal{A}}$, we have $S_{\mathcal{A}} \leq S_{\mathcal{A} \cup \alpha}$. Hence the Voronoi cells in \Cref{jfExpr} are strictly more granular, and we have the result.
\end{proof}

This is exactly as we would expect of our measure of expressibility --- adding more parametrized gates should not worsen this quantity, or equivalently increase $E^{\mathbb{U}/H}_{\mathcal{A}}$. We now discuss the property that if a circuit is highly expressible in the larger space $\mathbb{U}$, it should also be highly expressible in the more restricted space $\mathbb{U}/H$, proving it in two steps starting with the following theorem:

\begin{theorem}
    Let $H \leq \mathbb{U}$ with $\pi$ the canonical map $\pi: \mathbb{U} \rightarrow \mathbb{U}/H$. Let $K \leq \mathbb{U}$ be such that $\pi$ can be factorized as $\mathbb{U} \xrightarrow{\pi_{1}} \mathbb{U}/K \xrightarrow{\pi_{2}} \mathbb{U}/H$. Then $E_{\mathscr{A}}^{\mathbb{U}/H} \leq E_{\mathscr{A}}^{\mathbb{U}/K}$.
\end{theorem}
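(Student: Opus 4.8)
The plan is to read the subgroup relation off the factorization and then push a pointwise comparison through the defining integral. First I would note that the existence of the factorization $\pi = \pi_2 \circ \pi_1$ forces $K \leq H$: for $\pi_2$ to be well defined on cosets, any two unitaries sharing a $K$-coset must share an $H$-coset, and this holds for all unitaries exactly when $K \subseteq H$. Since $K \leq H$, for every state the $K$-orbit $K\ket{\psi}$ sits inside the $H$-orbit $H\ket{\psi}$ (\Cref{hequivstate}); equivalently, whenever a $K$-class $i'$ lies over an $H$-class $i = \pi_2(i')$, the representative states of $i'$ form a subset of those of $i$, i.e.\ $\pi_1^{-1}(\ket{i'}_{\mathbb{U}/K}) \subseteq \pi^{-1}(\ket{i}_{\mathbb{U}/H})$.

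Next I would establish a pointwise distance bound. Fixing an ansatz output $\ket{\psi_{\mathcal{A}}(\theta)}$ and using that $\mathcal{D}_{\mathbb{U}/H}$ in \Cref{distance} is an infimum over the preimage, the nesting above gives, for every $\theta$ and every $i'$ over $i$,
\[
\mathcal{D}_{\mathbb{U}/H}\!\left(\ket{i}_{\mathbb{U}/H},\, \ket{\psi_{\mathcal{A}}(\theta)}\right) \;\leq\; \mathcal{D}_{\mathbb{U}/K}\!\left(\ket{i'}_{\mathbb{U}/K},\, \ket{\psi_{\mathcal{A}}(\theta)}\right),
\]
because an infimum over a larger set cannot exceed an infimum over a smaller one. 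Minimizing both sides over $\theta$ preserves the inequality, so setting $f_H(i) = \min_{\theta}\mathcal{D}_{\mathbb{U}/H}(\ket{i}_{\mathbb{U}/H}, \ket{\psi_{\mathcal{A}}(\theta)})$ and defining $f_K$ analogously on $\mathbb{U}/K$, I obtain $f_H(\pi_2(i')) \leq f_K(i')$ as an inequality of functions on $\mathbb{U}/K$.

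Finally I would lift this to the integrals defining $E_{\mathcal{A}}^{\mathbb{U}/H}$ and $E_{\mathcal{A}}^{\mathbb{U}/K}$ (\Cref{jfExpr}). Because left multiplication by $\mathbb{U}$ is equivariant with respect to $\pi_2$, the pushforward $(\pi_2)_{\ast}\,\mu_{\mathbb{U}/K}$ is a $\mathbb{U}$-invariant Borel probability measure on $\mathbb{U}/H$, hence equals $\mu_{\mathbb{U}/H}$ by the uniqueness clause of the quoted Nachbin theorem. Integrating the pointwise bound and applying the pushforward change of variables then yields
\[
E_{\mathcal{A}}^{\mathbb{U}/H} = \int_{\mathbb{U}/H} f_H \,\mathrm{d}\mu_{\mathbb{U}/H} = \int_{\mathbb{U}/K} f_H\!\circ\!\pi_2 \,\mathrm{d}\mu_{\mathbb{U}/K} \leq \int_{\mathbb{U}/K} f_K \,\mathrm{d}\mu_{\mathbb{U}/K} = E_{\mathcal{A}}^{\mathbb{U}/K},
\]
which is the claim. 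Equivalently one can disintegrate $\mu_{\mathbb{U}/K}$ along $\pi_2$ into probability measures on the fibers $\cong H/K$ via the inversion formula \Cref{inversion} and bound each inner fiber integral below by $f_H(i)$.

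The hard part will not be the combinatorial nesting but the measure-theoretic bookkeeping: justifying that the pushforward (or fiberwise disintegration) is legitimate and measure-preserving requires the fibers of $\pi_2$ to carry well-behaved invariant probability measures, which is exactly where I would invoke compactness of $\mathbb{U}$ together with the earlier \emph{Lemma} on unimodularity, so that condition~1 of the Nachbin theorem holds for both quotients. A secondary subtlety is the measurability of $f_H$ and $f_K$: the inner minima over orbits and over $\theta$ must yield measurable (ideally continuous) functions for the integrals to be meaningful, which I would secure by noting that $\mathcal{D}$ is continuous and the orbits are compact, so each minimum is attained and depends continuously on its class.
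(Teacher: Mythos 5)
Your proposal is correct and follows essentially the same route as the paper's proof: the nesting $\pi_{1}^{-1}(\ket{i'}_{\mathbb{U}/K}) \subseteq \pi^{-1}(\ket{i}_{\mathbb{U}/H})$ yields the pointwise distance inequality (a minimum over a larger set cannot be larger), minimizing over $\theta$ preserves it, and the measure correspondence between $\mathbb{U}/K$ and $\mathbb{U}/H$ transfers the bound to the defining integrals. Your justification of the measure-transfer step via the pushforward being $\mathbb{U}$-invariant and hence equal to $\mu_{\mathbb{U}/H}$ by uniqueness (together with your explicit derivation of $K \leq H$ and the measurability remarks) is a more careful rendering of what the paper disposes of by citing \Cref{bigsubgrouphaarres}, but it is the same underlying argument.
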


\begin{proof}
    Consider $\mathcal{D}_{G/H} \bigg(\ket{i}_{G/H},  \ket{j} \bigg)$ and let $S = \pi_{2}^{-1}(\ket{i}_{G/H}) \leq G/K$. Then we have $\mathcal{D}_{G/H} \bigg(\ket{i}_{G/H},  \ket{j} \bigg) \leq \mathcal{D}_{G/K} \bigg(\ket{k}_{G/K},  \ket{j} \bigg) \: \forall k \in S$ as $\pi_{1}^{-1}(\ket{k}_{G/K}) \leq \pi^{-1}(\ket{i}_{G/H})$. Hence $\min_{\theta} \mathcal{D}_{G/H} \bigg(\ket{i}_{G/H},  \ket{\psi(\theta)} \bigg)  \leq \bigg(\ket{k}_{G/K},  \ket{j} \bigg)$ so that $\ket{i}_{G/H}$ can never be found in a Voronoi cell such that its distance to the state $\ket{\psi_{\mathcal{A}}(\theta)}$ is worse than that for any state in $S$. But the measure assigned to $\ket{i}_{G/H}$ is the same as that assigned to $S$ by construction and \Cref{bigsubgrouphaarres}, and so we are done.
\end{proof}

\begin{corollary}
\label{OrderCorr}
    Let $H \leq \mathbb{U}$. Then $E_{\mathcal{A}}^{\mathbb{U}/H} \leq E_{\mathcal{A}}^{\mathbb{U}}$.
\end{corollary}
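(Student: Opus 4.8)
The plan is to obtain this as an immediate specialization of the preceding theorem, taking the intermediate subgroup $K$ to be the trivial subgroup $\{I\}$. The key observation is that the trivial subgroup is contained in every subgroup (every subgroup contains the identity, as noted in the remark above), so in particular $\{I\} \leq H$, and the quotient $\mathbb{U}/\{I\}$ is canonically identified with $\mathbb{U}$ itself, since each coset $U\{I\} = \{U\}$ is a singleton. Under this identification the distance $\mathcal{D}_{\mathbb{U}/\{I\}}$ reduces to the underlying distance $\mathcal{D}$ on $\mathbb{U}$ (the minimization in \Cref{distance} ranges over a single point), and hence $E_{\mathcal{A}}^{\mathbb{U}/\{I\}} = E_{\mathcal{A}}^{\mathbb{U}}$.

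Next I would verify that the factorization hypothesis of the theorem is satisfied with this choice. Setting $K = \{I\}$, the canonical map $\pi : \mathbb{U} \rightarrow \mathbb{U}/H$ factors as $\mathbb{U} \xrightarrow{\pi_{1}} \mathbb{U}/\{I\} \xrightarrow{\pi_{2}} \mathbb{U}/H$, where $\pi_{1}$ is the projection onto singleton cosets and $\pi_{2}$ sends $U\{I\} \mapsto UH$; this latter map is well defined precisely because $\{I\} \leq H$. Thus all the hypotheses of the theorem hold for $K = \{I\}$.

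Applying the theorem with this $K$ then yields $E_{\mathcal{A}}^{\mathbb{U}/H} \leq E_{\mathcal{A}}^{\mathbb{U}/\{I\}}$, and substituting the identification $E_{\mathcal{A}}^{\mathbb{U}/\{I\}} = E_{\mathcal{A}}^{\mathbb{U}}$ from the first step gives the claimed inequality. I do not anticipate any genuine obstacle, since this is a direct corollary; the only point requiring a moment of care is the identification $\mathbb{U}/\{I\} \cong \mathbb{U}$ together with the matching identification of the distance functions, so that the two expressibility measures genuinely coincide rather than merely being comparable. This also aligns with the intended interpretation: since smaller $E$ denotes greater expressibility, quotienting out a symmetry can only make the ansatz easier to express, as it suffices to approach any single representative of each equivalence class.
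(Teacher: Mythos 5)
Your proposal is correct and follows essentially the same route as the paper, which applies the preceding theorem with $\pi_{1} = \mathrm{id}$, i.e.\ with $K$ the trivial subgroup so that $\mathbb{U}/K$ is identified with $\mathbb{U}$. Your write-up simply makes explicit the identification $\mathbb{U}/\{I\} \cong \mathbb{U}$, the matching of distance functions, and the factorization hypothesis, all of which the paper leaves implicit in its one-line proof.
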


\begin{proof}
    This is a direct application of the previous theorem with $\pi_{1} = \mathrm{id}(G)$.
\end{proof}

The converse clearly is not true -- it is much easier to generate states close to \textit{one} certain representative of a coset than it is to generate states that are close to \textit{all} of them. In layman's terms, one can think that the symmetry in the system makes it easier to achieve randomness! We do however expect that appending a symmetry-creating unitary to an ansatz that is highly expressible in $\mathbb{U}/H$ should lead to high expressibility in $\mathbb{U}/H$. This turns out to be provably true if we pick the $\mathcal{D}$ in \Cref{distance} to be related to their overlap, or equivalently the Frobenius norm of their density matrices, so that we can use the unitarity property.

\begin{theorem}
    Let $\mathcal{A}$ be an ansatz and $U({\phi_{i}}), i \in I$ be a parametrized gate such that $\forall h \in H, \exists {\phi_{i}} \equiv {\phi_{h}}$ such that $U({\phi_{h}}) = h$. Further let $\mathcal{D} \bigg( \ket{k}, \ket{j} \bigg) = 1 - |\braket{k}{j}|^2$. Then $E^{\mathbb{U}}_{\mathcal{A} \cup U} \leq E^{\mathbb{U}/H}_{\mathcal{A}}$.  
\end{theorem}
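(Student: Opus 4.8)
The plan is to prove the bound by a \emph{pointwise} comparison of the two integrands, followed by an application of the measure-pushforward identity \Cref{bigsubgrouphaarres}. Writing the augmented ansatz output as $\ket{\psi_{\mathcal{A} \cup U}(\theta,\phi)} = U(\phi)\ket{\psi_{\mathcal{A}}(\theta)}$, I would first spell out both quantities in the integral form of \Cref{jfExpr}: $E^{\mathbb{U}}_{\mathcal{A} \cup U} = \int_{\mathbb{U}} \min_{\theta,\phi} \mathcal{D}(\ket{j}, U(\phi)\ket{\psi_{\mathcal{A}}(\theta)}) \, \mathrm{d}\mu_{\mathbb{U}}(j)$ (the $H=\{I\}$ specialization for the full space) and $E^{\mathbb{U}/H}_{\mathcal{A}} = \int_{\mathbb{U}/H} \min_{\theta} \mathcal{D}_{\mathbb{U}/H}(\ket{i}_{\mathbb{U}/H}, \ket{\psi_{\mathcal{A}}(\theta)}) \, \mathrm{d}\mu_{\mathbb{U}/H}(i)$. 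The target is to show the $\mathbb{U}$-integrand at a state $\ket{j}$ is at most the $\mathbb{U}/H$-integrand evaluated at the coset $\pi(\ket{j})$.

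The central step exploits the unitary invariance of the chosen distance: since $\mathcal{D}(\ket{k},\ket{j}) = 1 - |\braket{k}{j}|^2$, we have $\mathcal{D}(\ket{j}, h\ket{\psi_{\mathcal{A}}(\theta)}) = \mathcal{D}(h^{\dagger}\ket{j}, \ket{\psi_{\mathcal{A}}(\theta)})$ for every $h \in H$. By hypothesis $U(\phi)$ realizes every element of $H$, so restricting the inner minimization to those $\phi$ with $U(\phi) = h \in H$ can only raise the minimum, giving $\min_{\theta,\phi}\mathcal{D}(\ket{j}, U(\phi)\ket{\psi_{\mathcal{A}}(\theta)}) \le \min_{\theta}\min_{h\in H}\mathcal{D}(\ket{j}, h\ket{\psi_{\mathcal{A}}(\theta)})$. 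Applying unitary invariance and the substitution $h \mapsto h^{-1}$, legitimate because $H$ is a group and $h^{\dagger}=h^{-1}$, the right-hand side equals $\min_{\theta}\min_{h\in H}\mathcal{D}(h\ket{j}, \ket{\psi_{\mathcal{A}}(\theta)})$. Since the representatives of the class $\pi(\ket{j})$ are exactly $\{h\ket{j} : h \in H\}$, this is precisely $\min_{\theta}\mathcal{D}_{\mathbb{U}/H}(\pi(\ket{j}), \ket{\psi_{\mathcal{A}}(\theta)})$ by \Cref{distance}, which is the desired pointwise inequality.

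To finish, I would observe that $\ket{j} \mapsto \min_{\theta}\mathcal{D}_{\mathbb{U}/H}(\pi(\ket{j}), \ket{\psi_{\mathcal{A}}(\theta)})$ depends only on the coset of $\ket{j}$, hence descends to a well-defined function $\bar f$ on $\mathbb{U}/H$. Integrating the pointwise bound against the Haar measure on $\mathbb{U}$ and invoking \Cref{bigsubgrouphaarres} (with $\phi \equiv 1$), which identifies $\int_{\mathbb{U}} \bar{f}(\pi(x))\,\mathrm{d}\mu_{\mathbb{U}}(x)$ with $\int_{\mathbb{U}/H}\bar{f}(i)\,\mathrm{d}\mu_{\mathbb{U}/H}(i)$, collapses the majorizing integral to $E^{\mathbb{U}/H}_{\mathcal{A}}$, while the left-hand integral is $E^{\mathbb{U}}_{\mathcal{A}\cup U}$. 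This gives $E^{\mathbb{U}}_{\mathcal{A}\cup U} \le E^{\mathbb{U}/H}_{\mathcal{A}}$, with equality precisely when $U$ realizes exactly $H$.

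The main obstacle I anticipate is the measure-theoretic bookkeeping rather than the algebra: one must check that the comparison function is genuinely constant on cosets so that the pushforward identity \Cref{bigsubgrouphaarres} applies, and one must track the direction of the bound when $U(\phi)$ realizes a strict superset of $H$, where the extra parametric freedom only lowers $E^{\mathbb{U}}_{\mathcal{A}\cup U}$ and so preserves the inequality. A secondary point worth stating explicitly is that the whole argument hinges on the unitary invariance of $\mathcal{D}$; for a generic distance the transfer of $h$ between the two arguments would break down, which is exactly why the theorem restricts to the overlap-based distance.
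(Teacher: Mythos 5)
Your proposal is correct and follows essentially the same route as the paper's own proof: a pointwise bound obtained by restricting the augmented ansatz's $\phi$-minimization to parameters realizing elements of $H$, transferring $h$ across the inner product via the unitary invariance of $\mathcal{D}\big(\ket{k},\ket{j}\big) = 1 - |\braket{k}{j}|^2$, and then equating the two integrals through the measure correspondence of \Cref{bigsubgrouphaarres}. The only caveat is your closing claim that equality holds \emph{precisely} when $U$ realizes exactly $H$ --- the ``only if'' direction is unjustified (and unneeded for the stated inequality); otherwise the argument matches the paper's.
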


\begin{proof}
    Define $\pi: G \rightarrow G/H$, and $S$ be the set of all $\ket{\psi} \in \pi^{-1}(\ket{i}_{G/H})$. We let $\ket{\psi_{\min}} \in S$ be such that if we take $\theta_{0}$ to be the angle that minimizes the main expression, then $\mathcal{D}_{G/H} \bigg(\ket{i}_{G/H},  \ket{\psi_{\mathcal{A}} (\theta_{0})} \bigg) = 1 - |\braket{\psi_{\min}}{\psi_{\mathcal{A}}(\theta_{0})}|^2$. But $\ket{\psi} \in S \implies \ket{\psi_{\min}} = h\ket{\psi}$ for some $h \in H$, and we have $\braket{\psi_{\min}}{\psi_{\mathcal{A}}(\theta_{0})}  = \bra{\psi_{\min}}h^{\dag}h\ket{\psi_{\mathcal{A}}(\theta_{0})} = \bra{\psi}h\ket{\psi_{\mathcal{A}}(\theta_{0})} = \braket{\psi}{\psi_{\mathcal{A} \cup U}(\theta_{0}, \phi_{h})}$. Now since $\ket{\psi}$ was arbitrary, we have that the distance $\mathcal{D}\bigg(\ket{\psi}, \ket{\psi_{\mathcal{A}}(\theta_{0}, \phi_{h})} \bigg) = \mathrm{min}_{\theta} \ \mathcal{D}_{G/H} \bigg(\ket{i}_{G/H},  \ket{\psi_{\mathcal{A}} (\theta)} \bigg)$ for all $\ket{\psi} \in S$, and hence $\mathrm{min}_{\theta} \mathcal{D}\bigg(\ket{\psi}, \ket{\psi_{\mathcal{A} \cup U} (\theta)} \bigg) \leq \mathrm{min}_{\theta} \ \mathcal{D}_{G/H} \bigg(\ket{i}_{G/H},  \ket{\psi_{\mathcal{A}} (\theta)} \bigg)$. But since by \Cref{bigsubgrouphaarres} $S \leq G$ and $\ket{i}_{G/H}$ are assigned the same volume in the integration, we are done.
\end{proof}

\begin{theorem}
   Let $\mathcal{A}$ be an ansatz, $H \leq \mathbb{U}$, $\mathcal{D} \bigg( \ket{k}, \ket{j} \bigg) = 1- |\braket{k}{j}|^2$. Further, let $d$ =  $\mathrm{dim}(\mathbb{U})$. Then $0 \leq E^{\mathbb{U}/H}_{\mathcal{A}} \leq 1-(\frac{1}{d})^2$.
\end{theorem}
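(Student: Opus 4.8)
The plan is to establish the two inequalities separately, treating the lower bound as essentially immediate and concentrating the work on the upper bound. Throughout I would use the explicit form of the quotient distance from \Cref{distance}, namely $\mathcal{D}_{\mathbb{U}/H}(\ket{i}_{\mathbb{U}/H}, \ket{j}) = 1 - \max_{k \in \pi^{-1}(\ket{i}_{\mathbb{U}/H})} |\braket{k}{j}|^2$, which rewrites \Cref{defk} as $E^{\mathbb{U}/H}_{\mathcal{A}} = 1 - \mathbb{E}_{i}\bigl[ \max_{\theta} \max_{k \in \pi^{-1}(\ket{i}_{\mathbb{U}/H})} |\braket{k}{\psi_{\mathcal{A}}(\theta)}|^2 \bigr]$, where the expectation is the Haar average over equivalence classes.

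For the lower bound, I would simply note that each overlap obeys $|\braket{k}{j}|^2 \leq 1$ by Cauchy--Schwarz for normalized states, so $\mathcal{D}_{\mathbb{U}/H} \geq 0$; the minimum over $\theta$ and the average over equivalence classes of nonnegative quantities remain nonnegative, giving $E^{\mathbb{U}/H}_{\mathcal{A}} \geq 0$.

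For the upper bound it suffices to lower bound the expected maximal squared overlap. First I would invoke \Cref{OrderCorr} to reduce to the case $H = \{I\}$: since $E^{\mathbb{U}/H}_{\mathcal{A}} \leq E^{\mathbb{U}}_{\mathcal{A}}$, any bound for the full unitary group transfers. There the classes are singletons and $\ket{i} = U\ket{0}$ is a single Haar-random pure state. Because the maximum over $\theta$ dominates the value at any fixed parameter, I would fix one output $\ket{\psi_0} = \ket{\psi_{\mathcal{A}}(\theta_0)}$ (which exists for every ansatz) and bound $\max_\theta |\braket{i}{\psi_{\mathcal{A}}(\theta)}|^2 \geq |\braket{i}{\psi_0}|^2$. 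The remaining ingredient is the Haar first moment $\mathbb{E}_{U}\bigl[|\bra{0}U^\dagger\ket{\psi_0}|^2\bigr] = 1/d$, a standard consequence of the invariance of the Haar measure, which can equivalently be obtained directly on $\mathbb{U}/H$ through the inversion identity \Cref{bigsubgrouphaarres} that converts the average over cosets into an average over $\mathbb{U}$. This yields $E^{\mathbb{U}/H}_{\mathcal{A}} \leq 1 - 1/d$, and since $d \geq 1$ gives $1/d \geq 1/d^2$, the claimed bound $E^{\mathbb{U}/H}_{\mathcal{A}} \leq 1 - 1/d^2$ follows a fortiori.

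The Cauchy--Schwarz and first-moment estimates are routine; the only genuine subtlety, and the step I would be most careful about, is the bookkeeping of the minimizations and the averaging: one must justify pushing the single-representative bound through both the inner minimization over the coset and the outer minimization over $\theta$ before taking the expectation, and confirm that the representative overlap is averaged against the correct coset-induced measure (which is precisely what \Cref{bigsubgrouphaarres} licenses). If the sharp constant were desired, the same argument already delivers the stronger $1 - 1/d$, so the stated $1 - 1/d^2$ is comfortably met.
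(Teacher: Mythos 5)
Your proposal is correct and follows essentially the same route as the paper's own proof: nonnegativity of the fidelity-based distance for the lower bound, reduction from $\mathbb{U}/H$ to $\mathbb{U}$ via Corollary~\ref{OrderCorr}, replacement of the minimum over $\theta$ by the distance to a single fixed output state, and a Haar moment computation. The substantive difference lies in that last step, and your version is the correct one. The paper evaluates $1 - \bigl|\int_{\mathbb{U}} \bra{0}U\ket{0}\,\mathrm{d}\mu(U)\bigr|^{2}$ and asserts $\int_{\mathbb{U}} U\,\mathrm{d}\mu(U) = \mathbb{I}/d$, arriving at exactly $1-1/d^{2}$. That identity is false: by invariance of the Haar measure under $U \mapsto e^{i\phi}U$, the first moment $\int_{\mathbb{U}} U\,\mathrm{d}\mu(U)$ vanishes; moreover, the expectation cannot be pulled inside the modulus, since $\mathbb{E}\bigl[1-|X|^{2}\bigr] \neq 1-\bigl|\mathbb{E}[X]\bigr|^{2}$ in general. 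The quantity actually needed is the one you compute, $\mathbb{E}_{U}\bigl[|\bra{0}U^{\dagger}\ket{\psi_{0}}|^{2}\bigr] = \bra{\psi_{0}}\bigl(\int_{\mathbb{U}} U\ketbra{0}{0}U^{\dagger}\,\mathrm{d}\mu(U)\bigr)\ket{\psi_{0}} = 1/d$, which yields the sharper bound $E^{\mathbb{U}/H}_{\mathcal{A}} \leq 1-1/d$; your closing remark that $1-1/d \leq 1-1/d^{2}$ then recovers the stated claim a fortiori. In short, your proof is valid and in fact repairs the paper's computation; the only consequence is that the constant $1-(1/d)^{2}$ in the theorem statement is seen to be non-tight, with $1-1/d$ being the bound this argument actually delivers.
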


\begin{proof}
    First notice that $\mathcal{D}_{G/H}$ is non-negative as the inner product on $G \geq 0$, so the integral must also be $0$ --- the integral of a non-negative function must also be non-negative. Thus we have the first part of the inequality. For the second part, we note that $E^{G/H}_{\mathcal{A}} \leq E^{G}_{\mathcal{A}}$ by Corollary \ref{OrderCorr}, and so computing $E^{G}_{\mathcal{A}}$ will provide an upper bound. This will be a circuit that prepares exactly one state, and by symmetry (or more technically the invariance of the Haar integral) it should not matter which, so we can just choose the Unitary operator to be the identity and hence the state to be $\ket{0}$. Hence what we require is to compute  $1 - |\int_{U \in G}  \bra{0}U\ket{0} \mathrm{d} \mu_{G} (U)|^2 = 1 - |\bra{0} \int_{U \in G} U \mathrm{d}\mu_{G}(U) \ket{0}|^2 = 1 - |\bra{0}{\frac{\mathbb{I}}{d}}\ket{0}|^2 = 1- \frac{1}{d^{2}}$. 
\end{proof}

\textit{Practical Remarks}. In general, there are two paths we can take for numerical simulation --- either we use the operational definition in \Cref{defk}, or compute the Voronoi cells and numerically evaluate the integral in \Cref{jfExpr}. The latter of these turns out to be rather complicated since we need to compute the Voronoi cells in a general homogeneous space with the state fidelity used as the distance.

Let us proceed instead with the operational method. Let $\mathcal{A}$ be the ($d$-qubit) ansatz we are considering, which takes parameters ${\theta} \in \Theta$. We begin by choosing $\ket{\psi} \sim \mathrm{Haar}(\mathbb{U})$, and choose $\{ {\theta}_{i} | \ i \in [N] \} \sim \mathrm{Uniform}(\Theta)$. This sampling complexity turns out to be the largest theoretical bottleneck, as the number of parameters scales exponentially as $2^{d}$. However, for small enough $d$ this remains computable. Then we can compute the set $\{ \ket{\psi_{\mathcal{A}}({\theta_{i}})}| \ i \in [N] \}$ efficiently if we have access to a quantum computer. Now we compute the $H-$equivalent representatives of $\ket{\psi}$, which we term $H_{\psi}$. If $H$ is discrete, this may be done simply by computing $\ket{\psi _h} = h\ket{\psi} \forall h \in H$; otherwise, we are required to produce some $\epsilon$-net or other finite tesselation for $H$. %
We then proceed to evaluate the quantity $\min_{h,i}\left(1 - |\braket{\psi _h}{\psi_{\mathcal{A}}({\theta_{i}})}|^2\right)$, which can be obtained efficiently on a quantum computer.
Alternatively, we can use a minimization routine to directly find this minimum distance. 
The average of this distance over many different $\ket{\psi}$ gives a good approximate to $E_{\mathcal{A}}^{\mathbb{U}/H}$, the expressibility of $\mathcal{A}$ in the subspace $\mathbb{U}/H$.

Notice that even though we are sampling as per the Haar measure on $\mathbb{U}$ rather than $\mathbb{U}/H$, we will get the correct result anyway, because of \Cref{bigsubgrouphaarres}. That is, we need $\mu_{\mathbb{U}/H}(1_{S}) = \mu_{\mathbb{U}}(1_{\pi^{-1}(S)})$ with $\pi$ as usual the canonical map $\mathbb{U} \rightarrow \mathbb{U}/H$ --- so as long as we make the correspondence $S \rightarrow \pi^{-1}(S)$, the sampling probability will be correct, and if we sample a certain point $x \in \mathbb{U}$, we take that to be equivalent to sampling the point $\pi(x) \in \mathbb{U}/H$ (with the usual understanding of sampling points from a continuous distribution).

\subsection{Numerical Results for \texorpdfstring{$\mathbb{U}(4)$, $H = \{\mathbb{I}, SWAP \}$}{U4, H={I,SWAP}}}
In this section, we present the results of the expressibility calculations for two different ansatze. First note that it is important to also include the encoding layer in the setup --- if not, we are restricting ourselves only to a subset of the total part of the Hilbert space the ansatz covers. We assume a simple angle encoding, wherein $\mathrm{RX}$ gates are used to encode the data in the two-qubit circuit.

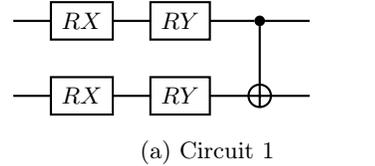
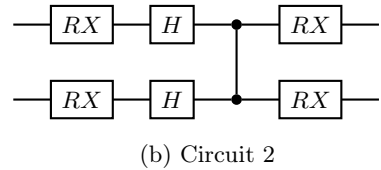
\begin{figure}[H]
    \centering
    \begin{subfigure}{0.3\textwidth}
            \begin{quantikz}
& \gate{RX} & \gate{RY} & \ctrl{1} & \qw \\
& \gate{RX} & \gate{RY} & \targ{} & \qw
\end{quantikz}
\caption{Circuit 1}
    \end{subfigure}
    \begin{subfigure}{0.3\textwidth}
    \begin{quantikz}
\\
& \gate{RX} & \gate{H} & \ctrl{1} & \gate{RX} & \qw\\
& \gate{RX} & \gate{H} & \ctrl{} & \gate{RX} & \qw
\end{quantikz}
\caption{Circuit 2}
    \end{subfigure}
\caption{(a) and (b) show the two circuits that we considered for the evaluation. The first pair of RX gates at the front represents the encoding.}
\end{figure}

\begin{figure}
\begin{subfigure}{0.35\textwidth}
\begin{tabular}{|l|l|l|}
\hline
{\color[HTML]{333333} Circuit \#} & {\color[HTML]{333333} $E^{\mathbb{U}(4)/H}_{\mathcal{A}}$} & {\color[HTML]{333333} Expr from \cite{Sim_2019}} \\ \hline
Circuit 1               
& $0.070 \pm 0.001$ 
& $0.095$  $ \pm 0.002$                                                                                            \\ \hline
Circuit 2                         & $0.204 \pm < 10^{-3}$                                          & $0.216 \pm 0.012$                                                   \\ \hline
\end{tabular}
    \end{subfigure}
    \caption{The table shows the values of the expressibilities calculated using the homogeneous expressibility scheme as well as the expressibility in \cite{Sim_2019}.}.

    \label{numerics}
\end{figure}

\section{Discussion}
This study investigates the interplay between symmetry and randomness in quantum systems, employing tools from homogeneous spaces. The novel concepts introduced, such as homogeneous space t-designs and homogeneous space pseudorandom states and unitaries, hold potential for various applications in quantum information. Designs refer to sets of objects and their groups, satisfying specific balance properties and symmetries. These have been extensively studied across diverse domains for centuries, ranging from error-correcting codes and card games to agriculture. Homogeneous space t-designs explicitly capture the symmetries of the underlying space and may prove beneficial in reducing resource requirements for tasks that involve t-designs, such as randomized benchmarking.

Similarly, quantum pseudorandom states and unitaries play crucial roles in various fields such as cryptography~\cite{ananth2022cryptography,ananth2023pseudorandom,haug2023pseudorandom}, complexity theory~\cite{kretschmer2021quantum}, learning theory~\cite{haug2023pseudorandom}, and black-hole physics~\cite{bouland2019computational}. We anticipate that the application of homogeneous space pseudorandom states and unitaries may hold promise in these aforementioned tasks, offering potential insights in scenarios involving symmetry. By harnessing the symmetries inherent in homogeneous spaces, these novel concepts have the potential to enhance the efficiency and effectiveness of algorithms and protocols across diverse quantum applications.

\paragraph*{Acknowledgements}
This research is supported by the National Research Foundation, Singapore and A*STAR under its Quantum Engineering Programme (NRF2021-QEP2-02-P03) and A*STAR (\#21709). DEK acknowledges funding support from the Agency for Science, Technology and Research (A*STAR) Central Research Fund (CRF) Award.

\bibliography{biblio}

\appendix

\clearpage

\onecolumngrid

\section{Limitations of existing approaches for quantifying expressibility} \label{sec:limitations}

Let us follow the calculation in Sim et al.~\cite{Sim_2019}. Here, they define the expressibility of the circuit as the difference between the ensemble of states generated by the VQC and an exact $t$-design, so if we define the operator $A$ 

\begin{equation}
    \mathrm{A} = \int_{\Uh, \Haar} \phi(\ketbra{\psi}{\psi})^{\otimes t} \dd \psi - \int_{\epsilon} (\ketbra{\phi}{\phi})^{\otimes t} \dd \phi,
\end{equation}

where we have directly used the definition in the main text. Then the squared Hilbert-Schmidt norm $\|\mathrm{A}\|^2 = \mathrm{Tr}(A^{\dag} A)$ is thought of as a measure of the expressibility for various $t$. The first challenge we face is finding the $\phi$ that minimizes the Hilbert-Schmidt norm --- otherwise we can only find an upper bound for it. In some sense, with different values of $\phi$, we get that the distribution generated by the states is a an $\epsilon$ approximate design for various $\epsilon$, and since we want a metric for it, we should find the lowest such $\epsilon$. Let us avoid this problem by picking a particular reasonable choice of strategy for the $\phi$ as follows: for every equivalence class $x \in \mathbb{U}/H$, $\phi$ maps $x$ to representatives with the same probability as the ensemble from the VQC.  With that out of the way we can follow \cite{Sim_2019} and write  $\mathrm{A} = \alpha^{(t)} - \mu^{(t)}$ for the two terms and get

\begin{equation}
    \mathrm{Tr}(A^{\dag} A) = \mathrm{Tr}((\alpha^{(t)} - \mu^{(t)})^{\dag} (\alpha^{(t)} - \mu^{(t)})).
\end{equation}

Let us suppress the superscript $t$ notation for now and then add it back in the end, keeping in mind we actually have one equation for every possible $t$. We can simplify this as 

\begin{equation}
    \mathrm{Tr}(A^{\dag} A) = \mathrm{Tr}(\alpha^{\dag}\alpha) + \mathrm{Tr}(\mu^{\dag}\mu) - 2\mathrm{Tr}(\alpha^{\dag}\mu),
    \label{mainTr}
\end{equation}
where we have implicitly used the fact that $\alpha$ and $\mu$ are Hermitian~\footnote{These are sums of projection matrices, so we can use linearity of the hermitian conjugate and the hermiticity of a projector.}. Now we can use the linearity of the trace to get

\begin{equation}
    \mathrm{Tr}(A^{\dag} A) = \mathrm{Tr}(\alpha^{\dag}(\alpha-2\mu)) + \mathrm{Tr}(\mu^{\dag}\mu).
\end{equation}

Now the issue that we have (in contrast to the case in \cite{Sim_2019}) is that we cannot simply replace the Haar integral with a projector on some symmetric subspace. Indeed the restriction to $\Uh$ means that we might not be able to commute the Haar integral with every operator on the symmetric subspace of $\mathbb{C}^{2^{n}}$, and so cannot use Schur's Lemma to conclude that this must be proportional to the identity matrix on this subspace. As a result, we cannot put the full focus of the $\|A^{\dag}A\|$ on just the part involving the Fidelity distribution of the output of the circuit as in \cite{Sim_2019}. It is still possible of course to find the KL divergence between these two distributions --- that of the (post-processed) Haar measure and that of the circuit --- but we lose whatever theoretical basis we had for that comparison.

\end{document}